\DeclareMathOperator{\La}{\mathcal{L}}
\DeclareMathOperator{\Nb}{\mathbb{N}}
\DeclareMathOperator{\Rb}{\mathbb{R}}
\DeclareMathOperator{\join}{\vee}
\DeclareMathOperator{\meet}{\wedge}
\DeclareMathOperator{\Prop}{\textbf{Prop}}
\newcommand{\type}[1]{\mathsf{{#1}}}
\newcommand{\term}[1]{\mathsf{{#1}}}
\newcommand{\true}{\mathtt{true}}
\newcommand{\false}{\mathtt{false}}
\newcommand{\rest}[2]{#1\big|_{#2}} %
\providecommand*{\xmapstofill@}{%
  \arrowfill@{\mapstochar\relbar}\relbar\rightarrow
}
\providecommand*{\xmapsto}[2][]{%
  \ext@arrow 0395\xmapstofill@{#1}{#2}%
}
\def\slashedarrowfill@#1#2#3#4#5{%
  $\m@th\thickmuskip0mu\medmuskip\thickmuskip\thinmuskip\thickmuskip
   \relax#5#1\mkern-7mu%
   \cleaders\hbox{$#5\mkern-2mu#2\mkern-2mu$}\hfill
   \mathclap{#3}\mathclap{#2}%
   \cleaders\hbox{$#5\mkern-2mu#2\mkern-2mu$}\hfill
   \mkern-7mu#4$%
}
\def\rightslashedarrowfill@{%
  \slashedarrowfill@\relbar\relbar\mapstochar\rightarrow}
\newcommand\xslashedrightarrow[2][]{%
  \ext@arrow 0055{\rightslashedarrowfill@}{#1}{#2}}
\newcommand{\allows}[2]{\Diamond_{#2}^{#1}}
\newcommand{\ensure}[2]{\Box_{#2}^{#1}}
\theoremstyle{definition}
\newtheorem{defn}{Definition}
\newtheorem{ex}[defn]{Example}
\newtheorem{rmk}{Remark}
\newtheorem{lemma}[defn]{Lemma}
\newtheorem{prop}[defn]{Proposition}
\definecolor{darkblue}{rgb}{0,0,0.7} 
\title{Behavioral Mereology: A Modal Logic for Passing Constraints}
\author{Brendan Fong, David Jaz Myers, David I. Spivak}
\date{}
\begin{document}
\maketitle

\begin{abstract}
	Mereology is the study of parts and the relationships that hold between them. We introduce a behavioral approach to mereology, in which systems and their parts are known only by the types of behavior they can exhibit. Our discussion is formally topos-theoretic, and agnostic to the topos, providing maximal generality; however, by using only its internal logic we can hide the details and readers may assume a completely elementary set-theoretic discussion. We consider the relationship between various parts of a whole in terms of how behavioral constraints are passed between them, and give an inter-modal logic that generalizes the usual alethic modalities in the setting of symmetric accessibility.
\end{abstract}

\section{Introduction}

Many thinkers, from Heidegger to Isham and D\"{o}ring have asked ``What is a thing?'' \cite{heidegger1972thing, doring2010thing}. Heidegger for example says,
\begin{quote}
    From the range of the basic questions of metaphysics we shall here ask this one question: What is a thing? The question is quite old. What remains ever new about it is merely that it must be asked again and again.
\end{quote}
In this article, our way of asking about things is focused on the mereological aspect of things, i.e.\ the relationship between parts and wholes. The point of departure is that, at the very least, a part affects a whole: ``when you pull on a part, the rest comes with.'' For example, wherever my left hand is, my right hand is not far away. A whole then, has the property that it coordinates constraints---or said another way, it enables constraints to be \emph{passed}---between parts. In this article, we present a logic for constraint passing.

Our approach has roots in categorical logic, and in particular Lawvere's
observation that existential and universal quantification can be characterized
as adjoints to pullback, in any topos. In particular, a system, or more
evocatively a \emph{behavior type} $B_S$, will be a set which we imagine as the
set of ways a system can behave over time. If $S$ is a dynamical system, then we
may think of $B_S$ as the set of lawful trajectories of this system.\footnote{More generally, we may take a behavior type to be an object in
  any topos \cite{MacLane.Moerdijk:1992a}. This allows behavior types which
  where the behaviors may vary in time (or space). Since we don't expect our
  audience to know any topos theory, and since all the ideas we describe will
  make sense in any topos, we will use the language of sets through this paper,
  and leave experts to make the topos-theoretic translation themselves.} We are
inspired here by Willem's behavioral approach to control theory (see \cites{Willems:1998a}{Willems:2007a}{Willems.Polderman:2013a}).

We work behavior-theoretically; to paraphrase Gump, ``X is as X does''. We
associate to a system $S$ its set $B_S$ of possible behaviors --- $B_S$ is the
\emph{behavior type} of $S$. If $P$ is a part
of our system $S$, then if we know the total behavior of $S$ we also know the
behavior of $P$; so, we have a function $|_P : B_S \to B_P$ which we think of as
``restricting'' the behavior $b \in B_S$ of $S$ to the behavior $b|_P \in B_P$
of $P$. However, we are considering $P$ \emph{as a part of} $S$, so every
behavior of $P$ must come from some behavior of the whole system $S$; so, the
restriction map $\rest{}{P}\colon B_S \to B_P$ must be surjective.

We use this analysis of
parthood to \emph{define} a part of the system $S$ to be a quotient of $B_S$, i.e.\ a surjection $B_S \twoheadrightarrow B_P$ from the behavior type.
This surjection describes how a behavior of the entire system determines a behavior of the part, and any behavior of the part \emph{qua} part must extend to the behavior of the whole: for any behavior of my hand, there exists at least one compatible behavior of my whole body. Given a behavior on one part, we can consider all possible extensions to the whole, and subsequently ask how those extensions restrict to behaviors of other parts. In this way one part may constrain another.

To describe the logic of these constraints, we introduce two new logical
operators, or ``inter-modalities'', closely related to the classical ``it is
possible that'' and ``it is necessary that'' modalities, known as the
\emph{alethic} modalities \cite{kripke1963semantical}. We view a constraint
$\phi$ on a part $P$ as a predicate on the behaviors of $P$ --- the predicate
``satisfies the constraint $\phi$''. We may ask whether satisfying this
predicate allows, or whether it ensures, various behaviors on another part
$B_Q$: the constraint $\phi$ is passed in these two ways  from $P$ to $Q$.  To
be a bit more explicit, the first new operator is called \emph{allows}, written
$\allows{P}{Q}\phi$. This describes the set of behaviors in $B_Q$ for which
there exists an extension in $B_S$ that restricts to some behavior satisfying
$\phi$. The second is the operator \emph{ensures}, written $\ensure{P}{Q}\phi$.
This describes the set of behaviors in $B_Q$ for which all extensions to $B_S$
restrict to some behavior satisfying $\phi$ in $B_P$. Our goal in this paper is
to describe the properties of these inter-modalities (``inter'' because they go
from one part to another), and to demonstrate their utility with some elementary
examples.

Our inter-modalities allow us to faithfully express concepts of the behavioral
approach to control theory as expressed in Willems' \emph{Open Dynamical Systems
and their Control} \cite{Willems:1998a}. In particular:
\begin{itemize}
\item A time-invariant system $S$ is \emph{controllable} if and only if for any
  two behaviors $b_1$ and $b_2$, there is a time delay $D$ such that the behavior
  $b_1|_{<0}$ restricted to time before $0$ and $b_2|_{>D}$ restricted to time
  after $D$ are \emph{compatible} in the sense of Definition \ref{defn:compatible}.
\item If $b_1$ is a behavior of a part $P$ of $S$ and $b_2$ a behavior of part
  $Q$ of $S$, then $b_1$ is \emph{observable} from $b_2$ if and only if $b_1$
  \emph{determines} $b_2$ in the sense of Definition \ref{defn:determines}.
\item If $C$ is the constraint a controller $P$ places on the behavior of a plant
  $Q$, then the \emph{controlled behavior} is the constraint $\allows{P}{Q}C$ of
  behaviors of $Q$ which are \emph{allowed} by $C$ in the sense of Definition
  \ref{defn:intermodalities}.
\item The \emph{control problem} is the problem of choosing a constraint $C$ on
  $P$ so that a constraint $\phi$ of the plant $Q$ is satisfied. The universal
  solution to this problem is given by the constraint $\ensure{Q}{P} \phi$ of
  behaviors on $P$ which \emph{ensure} that $Q$ behaves according to $\phi$, in
  the sense of Definition \ref{defn:intermodalities}.
\end{itemize}
We believe the logic for constraint passing presented in this paper can be a
useful tool for formalizing arguments in the behavioral approach to control theory.

\section{Systems and Their Parts: Behavioral perspective}

When we constrain a part of a system, we are constraining \emph{what it does}. This suggests that we should model a system by its \emph{type of possible behaviors}, its \textbf{behavior type}.

\subsection{Systems as behavior types}

Luckily, we won't need to settle on what precisely a behavior type is, so long
as we can reason about it logically. For this, we need the behavior type of our
system and its parts to be objects in a topos; then, we can use the internal
logic of the topos to reason about our behavior types. A topos can be understood
as a system of \emph{variable sets}; in our case, this allows us the freedom to
have sets varying in time, in space, or according to different observers. We
will just work in the topos of sets and functions, leaving it to the experts to
extend the theory to arbitrary toposes.

So, a \emph{behavior type} is simply a set whose elements are regarded as possible behaviors of a system.
We can think of it as the ``phase space'' of our system, in a general sense.
This terminology is inspired by Willem's behavioral approach to control theory \cite{Willems.Polderman:2013a}, which
describes a dynamical system as a subset $B \subseteq W^T$ of lawful
trajectories parameterized by time $T$ in some value space $W$. The set $B$ is
the behavior type of this system, where a behavior of the system is simply a
lawful trajectory. Many of our examples follow this general form.

\begin{ex}
We will present a few running examples of systems considered in terms of their behavior types. Let's introduce them now.

\begin{itemize}
    \item Consider a bicycle. The bicycle pedals might be moving at some speed $p$, and the bicycle wheels might be moving at some speed $w$, both real numbers. If the pedal is pushing at a certain speed, then the wheels are moving at a least a constant multiple of that speed. Therefore, we will take the behavior type of our bicycle to be
    $$B_{\type{Bicycle}} := \{(p, w) \in \Rb \times \Rb \mid w \geq rp \}$$
    for some fixed ratio $r \in \Rb$.
    
    \item Consider a glass of water placed in a room of temperature $R$. The glass of water has temperature $T_t \in \Rb$ for every time $t \in \Nb$. By Newton's principle, the temperature of the water satisfies the following simple recurrence relation:
    $$T_{t + 1} = T_t + k(R - T_t).$$
    Therefore, the behavior type of this glass of water is
    $$B_{\type{Water}} := \{ T \in \Nb \to \Rb \mid T_{t + 1} = T_t + k(R - T_t)\}.$$
    
    \item Consider an ecosystem consisting of foxes and rabbits. At any given time $t \in \Nb$, there are $f_t$ foxes and $r_t$ rabbits, where we mask our uncertainty about the precise population by allowing these to be arbitrary real values, rather than integer values. The population of the species at time $t + 1$ is determined by its population at time $t$, according to the relation  $R_t(f, r)$ given by the following standard recurrences:
    \begin{align*}
        f_{t + 1} &= (1 - d_f)f_t + c_f r_t f_t, \quad\text{and} \\
        r_{t + 1} &= (1 + b_r)r_t - c_r r_t f_t.
    \end{align*}
    This is known as the Lotka--Volterra predator--prey model, but we could use any model. Here $d_f$ is the death rate of the foxes, $b_r$ is the birth rate of rabbits, and $c_f$ and $c_r$ are rates at which the consumption of rabbits by foxes affect their respective populations.   From here, we take the behavior type of this ecosystem to be
    $$B_{\type{Eco}} := \{(f, r) : \Nb \to \Rb \times \Rb \mid \forall t.\, R_t(f, r)\}.$$
\end{itemize}
\end{ex}

\subsection{Parts as quotients of behavior type}
If we know a whole system $S$ (say, my body) is behaving like $b$, then we also know how any part $P$ of $S$ (say, my hand) is behaving: we just look at what $P$ is doing while $S$ does $b$. In other words, there should be a \emph{restriction} function, which we denote $\rest{}{P}\colon B_S \to B_P$, from the behavior type of the whole system to the behavior type of the part. 

Moreover, every behavior of a part $P$ will arise from \emph{some} behavior of the whole system: how could a part of the system do something if the system as a whole had no behaviors in which $P$ was doing that thing? Remember, we are considering the part $P$ \emph{as a part of the system $S$}, not on its own: my hand, not a severed hand. If we sever $P$ from the system $S$, it may be able to behave in ways that have no extension to $S$. But as a part of the system $S$, every behavior of the $P$ must be restricted from a behavior of $S$. We will give examples below, but first a definition.

\begin{defn}\label{def.part}
The behavior type of a \emph{part} of a system $S$ is a surjection
$\rest{}{P}\colon B_S \to B_P$ out of $B_S$ which we call the ``restriction from
$S$ to $P$''. We define the category of parts of $B_S$ to have as objects the parts of $S$ and as morphisms the commuting triangles
        \begin{center}
            \begin{tikzcd}[ampersand replacement=\&]
             \& B_S \arrow[ld, "|_P"', two heads] \arrow[rd, "|_Q", two heads] \&  \\
            B_P \arrow[rr, "|_Q"',two heads] \&  \& B_Q
            \end{tikzcd}
        \end{center}
Note that if such a map $B_P \to B_Q$ exists, then it will be unique and a
surjection. If there is such a map, we write $P \geq Q$ and say that $Q$ is a
part of $P$. This gives a preorder on parts, which we refer to as the
\emph{lattice of parts} of $S$.\footnote{We will see in Section
  \ref{subsec:compatibility.and.the.lattice.of.parts} that it is indeed a
  lattice.}
\end{defn}

For example, suppose that a system $S$ is divided into a \emph{plant} $P$ and a
\emph{controller} $C$. For example, $B_S$ might be $\{\alpha : \Rb \to \Rb^{p + c}
\mid \La(\alpha)\}$ a set of $p + c$ real variables satisfying a dynamical law $f$,
the first $p$ of which concern the plant $P$ and the last $c$ of which concern
the controller $C$. Then $B_P$ would be $\{\rho : \Rb \to \Rb^{p} \mid \exists
\gamma : \Rb \to \Rb^{c}.\, \La(\rho, \gamma)\}$ of $p$ real variables for which
there is some extension of $c$ variables (the behavior of the controller) which
is valid according to the dynamical law. The projection map $\Rb^{p + c} \to
\Rb^p$ gives a surjection from $B_S \twoheadleftarrow B_P$, witnessing that the
plant is a \emph{part} of the whole system.

In practice, we may have certain parts of $S$ in mind, and so we may consider a
sublattice of that defined in Definition \ref{def.part}. 

\begin{rmk}
Definition \ref{def.part} may look a little backwards. Usually a ``part'' is a subset; here we have defined a part to be a \emph{quotient}. What we have defined to be a part is often called a \emph{partition} (of $B_S$). 

What is happening here is a well-known ``space/function'' duality: we are not considering the system $S$ as some sort of object in space, but rather its type of behaviors $B_S$. Often, the behaviors $B_S$ of a system $S$ may be realized as functions on some sort of space $S$; this gives us a contravariance in $S$, which we see in the definition of part $Q$ is part of $P$ if there is a map ``going the other way,'' $B_P\to B_Q$.\end{rmk}

\begin{ex}
Here are some examples of parts.
\begin{itemize}
    \item The two parts of the bicycle under consideration are the pedal and the wheel. Explicitly, the behavior types of these parts of the bicycle are the types of all possible behaviors which arise as some behavior of the whole bicycle:
    \begin{align*}
        B_{\type{Pedal}} &:= \{ p \in \Rb \mid \exists w. (p, w) \in B_{\type{Bicycle}}\},\\
        B_{\type{Wheel}} &:= \{ w \in \Rb \mid \exists p. (p, w) \in B_{\type{Bicycle}}\}.
    \end{align*}
    In this case, every real number is a possible speed of the pedal, and every real number a possible speed of the wheel.
    
    \item In the system $\type{Water}$ of the cup of water sitting in the room, there is just one thing we are considering the behavior of: the cup. But, we can see this behavior at many different times. For every time $t \in \Nb$, we get a part $\type{Water}_t$ of the cup at time $t$ with behaviors
    $$B_{\type{Water}_i} := \{ x \in \Rb \mid \exists T \in B_{\type{Water}}.\, T_t = x\}.$$
    In fact, for any set $D \subseteq \Nb$ of times, we get the behavior type of the cup during $D$:
    $$B_{\type{Water}_D} := \{ x \in D \to \Rb \mid \exists T \in B_{\type{Water}}.\, \forall d \in D.\, T_d = x_d\}.$$
    
    \item The ecosystem consisting of foxes and rabbits is more complicated than the cup, but the principle is the same. We can consider the system at different times, and restrict our attention to just foxes or rabbits as we please. In particular, we let $\type{Fox}_t$ be the system of foxes at time $t$, and $\type{Rabbit}_t$ be the system of rabbits at time $t$. These have behavior types
    \begin{align*}
        B_{\type{Fox}_t} &:= \{ x \in \Rb \mid \exists (f, r) \in B_{\type{Eco}}.\, f_t = x\},\\
        B_{\type{Rabbit}_t} &:= \{ x \in \Rb \mid \exists (f, r) \in B_{\type{Eco}}.\, r_t = x\}.
    \end{align*}
\end{itemize}
\end{ex}

A surjection $\rest{}{P}\colon B_S\to B_P$ out of a set may equally be presented by its kernel pair, the equivalence relation on $B_S$ where $b\sim_P b'$ iff $\rest{b}{P}=\rest{b'}{P}$. We may call this relation \emph{observational equivalence}; the behaviors $b,b'$ with $b\sim_Pb'$ are \emph{observationally equivalent} relative to $P$. This is clearest when thinking of $P$ as some measuring device in a larger system; two behaviors of the whole system are observationally equivalent relative to our measuring device when it measures them to be the same. Two behaviors of my body are hand-equivalent if they are indistinguishable by looking at my hand; and two times are clock-equivalent when they read the same on the clock face.

There is essentially (i.e.\ up to isomorphism) no distinciton between a quotient of $B_S$ and an equivalence relation on $B_S$. Thus we are defining parts of $S$ to be equivalence relations on $S$-behaviors. This seems to be a novel approach to mereology, though we cannot claim to know the literature well enough to be sure.

\subsection{Compatibility}

Now we turn our attention to how behaviors of various parts of the system relate to one another. The most basic relation between behaviors of two parts is that of being simultaneously realizable by a behavior of the whole system. We call this relation \emph{compatibility}.

\begin{defn}\label{defn:compatible}
If $P$ and $Q$ are parts of $S$, then we say behaviors $a \in B_P$ and $b \in B_Q$ are \emph{compatible}, denoted $\mathfrak{c}(a, b)$, if there is a behavior $s \in B_S$ of the whole system that restricts to both $a$ and $b$, i.e.\
$$\mathfrak{c}(a, b) :\equiv \exists s \in S.\, a = \rest{s}{P}\, \wedge\, \rest{s}{Q} = b.$$

Generally, if $a_i \in P_i$ is some family of behaviors indexed by a set $I$, then this family is said to be \emph{compatible} if there is an $s \in S$ such that $\rest{s}{P_i} = a_i$ for all $i\in I$.
\end{defn}

In other words, two behaviors (one on each of two parts) are compatible if there is a behavior of the whole system that restricts to both of them. 

\begin{ex}
Examples of compatible behaviors are easily obtained by restricting a single system behavior to two parts.
\begin{itemize}
\item In the bicycle example, we see that a speed $p$ of the pedal is compatible with a speed $w$ of the wheel if and only if $w \geq rp$:
$$\mathfrak{c}(p, w) = w \geq rp.$$

\item In the cup of water example, a temperature $T^0 \in B_{\type{Water}_t}$ at time $t$ is compatible with a temperature $T^1 \in B_{\type{Water}_{t'}}$ at a later time $t'$ are compatible if and only if $T^1$ follows from $T^0$ via the recurrence relation. In particular, if $t' = t + 1$, then
$$\mathfrak{c}(T^0, T^1) = (T^1 = T^0 + k(R - T^0)).$$

\item In the ecosystem example, we have a number a different comparisons to choose from. A fox population $f^0 \in B_{\type{Fox}_t}$ at time $t$ is compatible with $f^1 \in B_{\type{Fox}_{t + 1}}$ at time $t + 1$ if and only if there is simultaneous rabbit population $r^0$ so that $f^1 = (1 - d_f)f^0 + c_f r^0 f^0$. 

Two simultaneous fox and rabbit populations are compatible if and only if there is some history of the ecosystem which achieves those population at that time. In particular, any two populations of foxes and rabbits at time $0$ are compatible.
\end{itemize}
\end{ex}

\subsection{Compatibility and the lattice of parts}\label{subsec:compatibility.and.the.lattice.of.parts}
We can express some of the parts-lattice operations in terms of the compatibility relation.

\begin{prop}
The meet $P \cap Q$ of parts $P$ and $Q$ of $S$ has behavior type given by the following pushout.
\[
\begin{tikzcd}[row sep = small]
 & B_S \arrow[ld, two heads] \arrow[rd, two heads] &  \\
B_P \arrow[rd, two heads] &  & B_Q \arrow[ld, two heads] \\
 & B_{P \cap Q} \ar[uu, phantom, very near start, "\rotatebox{45}{$\urcorner$}"]& 
\end{tikzcd}
\]
In other words, a behavior of $P \cap Q$ is either a behavior of $P$ or a behavior of $Q$, where these are considered equal if they are compatible.
$$B_{P \cap Q} \cong \frac{B_P + B_Q}{\mathfrak{c}}.$$

Here, $B_P + B_Q$ is the disjoint union of these two sets, and we are
quotienting out by the smallest equivalence relation for which $p \sim q$
whenever $\mathfrak{c}(p, q)$.

Dually, the join $P \cup Q$ has behaviors given by the image factorization of the induced map $B_S \to B_P \times B_Q$.
\[
\begin{tikzcd}
 & B_S \arrow[ld, two heads] \arrow[rd, two heads]\ar[d, two heads] &  \\
B_P & B_{P\cup Q}\ar[l, two heads]\ar[r, two heads]\ar[d, tail] & B_Q  \\
 & B_P\times B_Q\ar[ul, two heads]\ar[ur, two heads] 
\end{tikzcd}
\]
In other words, a behavior of $P \cup Q$ is a pair of compatible behaviors from $P$ and from $Q$.
$$B_{P \cup Q} \cong \{(a, b) \in B_P \times B_Q | \mathfrak{c}(a, b)\}.$$

Furthermore, the largest part $\top$ is $S$, and the smallest part $\bot$ is
empty if $S$ is empty and a singleton otherwise.
\end{prop}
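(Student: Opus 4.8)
The plan is to verify the four claims by checking the relevant universal properties directly in $\Set$, exploiting that a part is the same data as its kernel pair. Writing $\sim_P$ for the equivalence relation $b \sim_P b' :\equiv (\rest{b}{P} = \rest{b'}{P})$, the existence of the (necessarily unique, surjective) comparison map $B_P \to B_Q$ over $B_S$ is equivalent to the inclusion $\sim_P \subseteq \sim_Q$. Thus $P \geq Q$ iff $P$ is a finer partition than $Q$, and the lattice of parts is the opposite of the lattice of equivalence relations on $B_S$; this already guarantees that all meets and joins exist, and it remains to identify them with the stated constructions.

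For the meet, I would first observe that the pushout of $B_P \twoheadleftarrow B_S \twoheadrightarrow B_Q$ in $\Set$ is exactly $B_P + B_Q$ quotiented by the smallest equivalence relation identifying $\rest{s}{P}$ with $\rest{s}{Q}$ for each $s \in B_S$; since $\rest{s}{P}$ and $\rest{s}{Q}$ are compatible precisely by virtue of the witness $s$, this generating relation is $\{(a,b) \mid \mathfrak{c}(a,b)\}$, giving $B_{P\cap Q} \cong (B_P+B_Q)/\mathfrak{c}$. Next I would check that the two pushout legs and the diagonal $B_S \to B_{P \cap Q}$ are surjections --- this uses surjectivity of $\rest{}{P}$ and $\rest{}{Q}$, so that every class is hit by some $s$ --- so that $B_{P \cap Q}$ is a genuine part lying below both $P$ and $Q$. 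Finally, given any lower bound $R$, the maps $B_P \to B_R$ and $B_Q \to B_R$ agree when precomposed with the two restrictions from $B_S$, so they form a cocone under the span; the universal property of the pushout then produces the unique comparison $B_{P\cap Q} \to B_R$, exhibiting $P \cap Q$ as the greatest lower bound.

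For the join, I would use the epi--mono factorization $B_S \twoheadrightarrow \im \hookrightarrow B_P \times B_Q$ of $s \mapsto (\rest{s}{P}, \rest{s}{Q})$. The image is by definition $\{(a,b) \mid \exists s.\, a = \rest{s}{P} \wedge \rest{s}{Q} = b\} = \{(a,b) \mid \mathfrak{c}(a,b)\}$, and the epi half makes it a part $P \cup Q$. Composing the mono with the two product projections and using that $B_S \twoheadrightarrow B_{P\cup Q} \to B_P$ equals the surjection $\rest{}{P}$ shows each projection $B_{P\cup Q} \to B_P$ and $B_{P\cup Q} \to B_Q$ is itself surjective, so $P \cup Q \geq P, Q$. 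For minimality among upper bounds, note that if $R \geq P$ and $R \geq Q$ then the induced $B_R \to B_P \times B_Q$ has the same image as $B_S \to B_P \times B_Q$ (because $B_S \twoheadrightarrow B_R$ is surjective and the triangle over $B_S$ commutes), so it factors through $B_{P \cup Q} \hookrightarrow B_P \times B_Q$, yielding $R \geq P \cup Q$. The extremal parts are then immediate: $\top = S$ is the identity quotient $B_S = B_S$, which maps onto every $B_P$ via $\rest{}{P}$; and $\bot$ is the quotient by the total relation, namely $\emptyset$ when $B_S = \emptyset$ and a singleton otherwise, into which every $B_P$ maps uniquely over $B_S$.

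The step I expect to require the most care is matching the pushout to $(B_P+B_Q)/\mathfrak{c}$ in the meet: one must confirm that the equivalence relation generated on $B_P + B_Q$ by the compatibility witnesses is exactly the kernel pair of the pushout projection (a transitive-closure computation), and that the resulting legs are surjective rather than merely jointly surjective. Everything else reduces to the universal properties of the pushout and the image factorization together with the surjectivity of the restriction maps, which are routine once the order-reversing dictionary between parts and equivalence relations is in place.
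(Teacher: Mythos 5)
Your proof is correct, and in fact the paper states this proposition without any proof at all; your argument supplies exactly the standard one the authors leave implicit. The dictionary between parts and equivalence relations on $B_S$ (with the order reversed), the identification of the pushout of $B_P \twoheadleftarrow B_S \twoheadrightarrow B_Q$ with $(B_P + B_Q)/\mathfrak{c}$ via the coequalizer construction in $\Set$, the image factorization for the join, and the surjectivity checks that make each construction a genuine part are all carried out correctly, including the two details that genuinely need care (that the relation generated by the compatibility witnesses is the kernel pair of the pushout projection, and that the legs are individually, not merely jointly, surjective).
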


\begin{defn}
A part $P$ is \emph{strongly disjoint} from a part $Q$ if every behavior of $P$
is compatible with every behavior of $Q$. The two parts $P$ and $Q$ are
\emph{disjoint} if their intersection $P \cap Q$ is the minimal part. Strongly
disjoint parts are disjoint:
$$\forall a \in B_P.\, \forall b \in B_Q.\, \mathfrak{c}(a, b) \quad\Rightarrow\quad B_{P \cap Q} = \bot$$
\end{defn}

In general, we will be more interested in joins than in meets because joins are easier to work with (being subsets of a product, rather than quotients of a disjoint union by a non-transitive relation).

\begin{ex}
Let's consider some examples of joins and meets of parts.
\begin{itemize}
    \item In the example of the bicycle, note that we have
    $$B_{\type{Bicycle}} = B_{\type{Pedal} \cup \type{Wheel}},$$
    since a behavior of the bicycle was defined precisely to be a behavior of a pedal and a wheel satisfying a compatibility constraint.
    
    \item In the example of the cup of water, the behaviors $B_{\type{Cup}_D}$ over a duration $D \subset \Nb$ of times are the union of the behaviors $B_{\type{Cup}_d}$ for each time $d \in D$:
    $$B_{\type{Cup}_D} = B_{\bigcup \type{Cup}_d}.$$
    
    \item Similarly, in the ecosystem example, the parts of the ecosystem at various times are the join of parts at particular times. More interestingly, recall that every behavior of $\type{Fox}_0$ (starting population of foxes) is compatible with every behavior of $\type{Rabbit}_0$ (starting population of rabbits). Therefore, 
    $$B_{\type{Fox}_0 \cap \type{Rabbit}_0} = \bot$$
    This witnesses the fact that the parts $\type{Fox}_0$ and $\type{Rabbit}_0$ do not at all mutually constrain each other, and so have no shared sub-parts.
    
\end{itemize}
\end{ex}

\subsection{Determination recovers the order of parts}
\begin{defn}\label{defn:determines}
If $P$ and $Q$ are parts of $S$, and $a \in B_P$ and $b \in B_Q$, then $a$ \emph{determines} $b$ if every behavior $s$ of the whole system $S$ which restricts to $a$ also restricts to $b$.
$$\mathfrak{d}(a, b) :\equiv \forall s \in S.\, \rest{s}{P} = a \Rightarrow \rest{s}{Q} = b.$$
We say that a part $P$ \emph{determines} a part $Q$ if every behavior $a \in B_P$, determines some behavior $b \in B_Q$. 
\end{defn}

This is a much stronger notion than compatibility, and we shall show in Section \ref{Things:lem:Determines.Mereology} that it can be used to recover the original order relation $\geq$ on parts.

\begin{lemma}
A behavior always determines uniquely: if $\mathfrak{d}(a, b)$ and $\mathfrak{d}(a, b')$, then $b = b'$.
\end{lemma}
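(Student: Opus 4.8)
The plan is to exploit the one hypothesis that distinguishes a part from an arbitrary object, namely that the restriction map $\rest{}{P}\colon B_S \to B_P$ is \emph{surjective} (Definition \ref{def.part}). This surjectivity is exactly what rules out the degenerate case, and it is where the whole argument turns. The statement itself is a short instantiation once the right witness is produced, so the work is in identifying what makes determination behave like a partial \emph{function} rather than a mere relation.

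First I would fix $a \in B_P$ and assume both $\mathfrak{d}(a,b)$ and $\mathfrak{d}(a,b')$ hold, where $b, b' \in B_Q$. Because $\rest{}{P}$ is surjective, there is at least one $s \in B_S$ with $\rest{s}{P} = a$; this common witness is the crux of the proof. Unfolding $\mathfrak{d}(a,b)$ at this particular $s$ gives $\rest{s}{Q} = b$, and unfolding $\mathfrak{d}(a,b')$ at the \emph{same} $s$ gives $\rest{s}{Q} = b'$. Transitivity of equality then yields $b = \rest{s}{Q} = b'$, as required.

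The only point to watch—and the reason the lemma is worth recording at all—is that both \emph{determines} hypotheses must be evaluated at a single shared witness $s$, and manufacturing such a witness is precisely the content of surjectivity of $\rest{}{P}$. Without it, an element $a \in B_P$ admitting no extension to $B_S$ would satisfy $\mathfrak{d}(a,b)$ vacuously for every $b \in B_Q$, and uniqueness would collapse. I therefore expect the main obstacle to be conceptual rather than computational: recognizing that the part-\emph{qua}-part condition (surjectivity) is doing all the real work, after which the verification is a one-line application of the definition.
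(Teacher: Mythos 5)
Your proof is correct and follows the same route as the paper: both obtain a witness $s \in B_S$ with $\rest{s}{P} = a$ from surjectivity of the restriction map, then apply both determination hypotheses at that single $s$ to conclude $b = \rest{s}{Q} = b'$. Your added observation that surjectivity blocks the vacuous case is a nice gloss, but the argument is the paper's argument.
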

\begin{proof}
We know there is some $s \in S$ which restricts to $a$. Since $a$ determines $b$ and $b'$, $s$ restricts to both $b$ and $b'$; but then $b = b'$.
\end{proof}

\begin{prop}\label{Things:lem:Determines.Mereology}
For parts $P$ and $Q$ of $S$, the following are equivalent:
\begin{enumerate}
    \item $Q$ is a part of $P$, i.e.\ there is a surjection $B_P\twoheadrightarrow
      B_Q$ under $B_S$.
    \item $P$ $\mathfrak{c}$-determines $Q$, in the sense that for every $a \in B_P$ there is a unique $b \in B_Q$ such that $a$ is compatible with $b$.  In other words, $\forall a \in B_P.\, \exists! b \in B_Q.\, \mathfrak{c}(a, b)$.
    \item $P$ $\mathfrak{d}$-determines $Q$, in the sense that for every $a \in B_P$ there is a $b \in B_Q$ such that $a$ determines $b$. In other words, $\forall a \in B_P.\, \exists b \in B_Q.\, \mathfrak{d}(a, b)$.
    \item For all $a \in B_P$ and $b \in B_Q$, if $a$ is compatible with $b$, then $a$ determines $b$.
\end{enumerate}
\end{prop}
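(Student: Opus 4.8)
The plan is to prove the four conditions pairwise equivalent by closing the cycle $(1)\Rightarrow(4)\Rightarrow(3)\Rightarrow(2)\Rightarrow(1)$. Before running the cycle I would record the single fact that does all the work: the restriction $\rest{}{P}\colon B_S\to B_P$ is \emph{surjective}, so for every $a\in B_P$ the fibre $\{s\in B_S\mid \rest{s}{P}=a\}$ is non-empty. Hence the set of $b$ compatible with $a$ — which is exactly the image of this fibre under $\rest{}{Q}$ — is always non-empty. This means the real content of conditions (2), (3), (4) is never the \emph{existence} of a compatible or determined $b$, but its \emph{uniqueness}, i.e.\ the assertion that $\rest{}{Q}$ collapses the entire fibre over $a$ to a single point. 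Stating this up front clarifies why four superficially different conditions coincide, and it is the uniqueness lemma proved just above that licenses the ``$\exists!$'' in (2).

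For $(1)\Rightarrow(4)$: given the surjection $g\colon B_P\twoheadrightarrow B_Q$ with $g\circ\rest{}{P}=\rest{}{Q}$, suppose $\mathfrak{c}(a,b)$ with witness $s$. Then $b=\rest{s}{Q}=g(\rest{s}{P})=g(a)$, and for \emph{any} $s'$ with $\rest{s'}{P}=a$ the same computation gives $\rest{s'}{Q}=g(a)=b$, which is precisely $\mathfrak{d}(a,b)$. The next two implications are where surjectivity of $\rest{}{P}$ is invoked to manufacture witnesses. For $(4)\Rightarrow(3)$, pick any $s$ over $a$ and set $b:=\rest{s}{Q}$; this makes $\mathfrak{c}(a,b)$ hold, so (4) upgrades it to $\mathfrak{d}(a,b)$. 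For $(3)\Rightarrow(2)$, take the $b$ with $\mathfrak{d}(a,b)$ and a witness $s$ over $a$; then $\rest{s}{Q}=b$, giving $\mathfrak{c}(a,b)$, and uniqueness is immediate: any competing $b'$ with $\mathfrak{c}(a,b')$ arises from some $s'$ over $a$, and $\mathfrak{d}(a,b)$ forces $\rest{s'}{Q}=b$, so $b'=b$.

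For $(2)\Rightarrow(1)$: I define $g(a)$ to be the unique $b$ with $\mathfrak{c}(a,b)$, which is well-defined exactly by (2). Since every $s$ witnesses $\mathfrak{c}(\rest{s}{P},\rest{s}{Q})$, uniqueness yields $g(\rest{s}{P})=\rest{s}{Q}$, that is $g\circ\rest{}{P}=\rest{}{Q}$; surjectivity of $g$ is then inherited from that of $\rest{}{Q}$ (given $c\in B_Q$, choose $s$ with $\rest{s}{Q}=c$, so $g(\rest{s}{P})=c$). Thus $g$ exhibits $Q$ as a part of $P$.

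None of the steps is deep, so there is no serious obstacle; the care required is bookkeeping. The one thing to watch is the repeated toggling between $\mathfrak{c}$ and $\mathfrak{d}$: I must explicitly use surjectivity of $\rest{}{P}$ each time I need a witness $s$ lying over a given $a\in B_P$ (in $(4)\Rightarrow(3)$ and $(3)\Rightarrow(2)$), and in $(2)\Rightarrow(1)$ I must verify that $g$ is well-defined, commutes with the restrictions, and is surjective. Keeping these uses of surjectivity visible is what makes the argument self-contained.
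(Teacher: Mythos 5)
Your proof is correct, but note that the paper itself states Proposition \ref{Things:lem:Determines.Mereology} without any proof, so there is nothing to compare against; your argument fills that gap rather than paralleling one. The cycle $(1)\Rightarrow(4)\Rightarrow(3)\Rightarrow(2)\Rightarrow(1)$ checks out at every step, and you correctly isolate the one nontrivial ingredient: surjectivity of $\rest{}{P}\colon B_S\twoheadrightarrow B_P$, which makes every fibre over $a\in B_P$ inhabited, so that existence of a compatible behavior is free and the content of (2)--(4) is the collapse of each fibre under $\rest{}{Q}$ to a single point. Two small remarks. First, in $(3)\Rightarrow(2)$ you reprove uniqueness directly; you could instead cite the paper's lemma that determination is unique (``if $\mathfrak{d}(a,b)$ and $\mathfrak{d}(a,b')$ then $b=b'$''), since any $b'$ with $\mathfrak{c}(a,b')$ is forced to equal $b$ by $\mathfrak{d}(a,b)$ exactly as you argue --- either route is fine. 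Second, in $(2)\Rightarrow(1)$ your direct verification that $g$ is surjective is welcome, though the paper's remark after Definition \ref{def.part} already observes that any map $B_P\to B_Q$ commuting with the restrictions is automatically unique and surjective (surjectivity of $\rest{}{Q}=g\circ\rest{}{P}$ forces it); your explicit check makes the argument self-contained, which is appropriate given that the paper never wrote the proof down. One caveat worth keeping in mind, given the paper's stated ambition that all arguments work in an arbitrary topos: your definition of $g$ in $(2)\Rightarrow(1)$ by ``the unique $b$ with $\mathfrak{c}(a,b)$'' is the standard way to extract a function from a $\forall\exists!$ statement, and this is valid in the internal logic of any topos (unique choice), so no appeal to the axiom of choice is hidden there --- but it is worth saying so explicitly if you want the proof to match the paper's topos-agnostic standard.
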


\section{Constraints, Allowance, and Ensurance}

In this section, we introduce our new logical operators, $\allows{}{}$ and
$\ensure{}{}$, and prove some basic properties about them. We shall see in
Section \ref{subsec.compat_ensure} that these two operators pass constraints between parts. But first, what is a constraint?

\subsection{Constraints as predicates}
We will identify a constraint $\phi$ on a part $P$ with the predicate ``satisfies $\phi$'' on behaviors $B_P$ of $P$. In other words, we have the following definition.

Let $\Prop$ be the two element set $\{\true,\false\}$ of truth values. We think
of functions $\phi : X \to \Prop$ as predicates concerning the elements of $X$
--- applied to $x \in X$, $\phi$ gives a truth value $\phi(x)$ which says
whether or not $x$ satisfies the predicate $\phi$. 
\begin{defn}
A \emph{constraint} on a part $P$ is a map $\phi : B_P \to \Prop$. The type of constraints on $P$ is $\Prop^P$. We write
$$\phi \vdash \psi$$
to mean that $\phi$ \emph{entails} $\psi$, that is, if $\phi(b) = \true$, then $\psi(b) = \true$.
\end{defn}

For parts $P\geq Q$, we get an \emph{adjoint triple} that allows us to transform
constraints on $P$ to those on $Q$, and vice versa, given by the logical quantifiers:
    \begin{center}
        \begin{tikzcd}[ampersand replacement=\&]
        \Prop^{B_P} \arrow[rr, "\exists^P_Q"{name=exists}, bend left] \arrow[rr, "\forall^P_Q"'{name=forall}, bend right] \&  \& \Prop^{B_Q} \arrow[ll, "\Delta^Q_P" description, ""{name=delta}]
        \ar[from=exists, to=delta, phantom, "\Rightarrow"]
        \ar[from=delta, to=forall, phantom, "\Leftarrow"]
        \end{tikzcd}
    \end{center}
These functors are defined logically as follows:
\begin{align*}
    \exists^P_Q \phi(q) &:= \exists p \in B_P.\, \big((\rest{p}{Q} = q) \meet \phi(p)\big) \\
    \Delta^Q_P \psi(p) &:= \psi(\rest{p}{Q}) \\
    \forall^P_Q \phi(q) &:= \forall p \in B_P.\, \big((\rest{p}{Q} = q) \Rightarrow \phi(p)\big)
\end{align*}
The fact that they are \emph{adjoint} means that
\begin{align*}
    \exists^P_Q \phi \vdash \psi\quad &\iff \quad\phi \vdash \Delta^Q_P \psi \\
    \Delta^Q_P \psi \vdash \xi\quad &\iff\quad \psi \vdash \forall^P_Q \xi 
\end{align*}

We will write $\exists_P$, $\Delta^P$, and $\forall_P$ for $\exists^S_P$, $\Delta^P_S$ and $\forall^S_P$ respectively. As mentioned, these operations are functorial, meaning that $\exists^P_P(\phi)=\Delta^P_P(\phi)=\forall^P_P(\phi)=\phi$ and for $R \leq Q \leq P$,
\begin{align*}
    \exists^Q_R  \exists^P_Q &= \exists^P_R, \\
    \Delta^R_Q \Delta^Q_P &= \Delta^R_P, \\
    \forall^Q_R \forall^P_Q  &= \forall^P_R.
\end{align*}

\begin{lemma}\label{Thing:lem:Global.Modality.Props}
Recall that for part $P$ of system $S$ we write $s \sim_P s'$ for the relation $\rest{s}{P} = \rest{s'}{P}$ on $B_S$. Then for any predicate $\phi$ on $S$ we have:
\begin{enumerate}
    \item $\Delta^P \exists_P \phi(s) = \exists s'.\, (s\sim_P s') \wedge \phi(s')$
    \item $\Delta^P \forall_P \phi(s) = \forall s'.\, (s \sim_P s') \Rightarrow \phi(s')$
    \item $\phi \vdash \Delta^P \exists_P \phi$\quad and \quad $\Delta^P \forall_P \phi \vdash \phi$
\end{enumerate}
\end{lemma}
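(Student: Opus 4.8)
The plan is to unwind the definitions of the quantifiers $\exists_P$, $\forall_P$, and $\Delta^P$ and compute each composite directly, since all three claims are about what the composite predicate asserts on a given behavior $s \in B_S$. Recall that here $\exists_P = \exists^S_P$, $\forall_P = \forall^S_P$, and $\Delta^P = \Delta^P_S$, so these move constraints along the restriction $\rest{}{P}\colon B_S \to B_P$ and back.

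For part (1), I would start from the definition $\Delta^P \psi(s) = \psi(\rest{s}{P})$, applied with $\psi = \exists_P \phi$, giving $\Delta^P \exists_P \phi(s) = (\exists_P \phi)(\rest{s}{P})$. Then I expand $\exists_P \phi(q) = \exists s' \in B_S.\, (\rest{s'}{P} = q) \wedge \phi(s')$ at the point $q = \rest{s}{P}$, so the condition $\rest{s'}{P} = \rest{s}{P}$ appears, which is exactly $s \sim_P s'$. This yields $\exists s'.\, (s \sim_P s') \wedge \phi(s')$, as claimed. Part (2) is the formally dual computation: $\Delta^P \forall_P \phi(s) = (\forall_P \phi)(\rest{s}{P}) = \forall s'.\, (\rest{s'}{P} = \rest{s}{P}) \Rightarrow \phi(s')$, and again $\rest{s'}{P} = \rest{s}{P}$ is $s \sim_P s'$.

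For part (3), I would derive the two entailments from parts (1) and (2) using reflexivity of $\sim_P$. For $\phi \vdash \Delta^P \exists_P \phi$: if $\phi(s)$ holds, then taking the witness $s' := s$ in the formula from (1) works, since $s \sim_P s$ holds by reflexivity and $\phi(s)$ holds by assumption; hence $\Delta^P \exists_P \phi(s)$ holds. Dually, for $\Delta^P \forall_P \phi \vdash \phi$: if $\Delta^P \forall_P \phi(s)$ holds, then by (2) every $s'$ with $s \sim_P s'$ satisfies $\phi(s')$; instantiating at $s' := s$, again using reflexivity, gives $\phi(s)$. Alternatively these are just the unit and counit of the adjunctions $\exists_P \dashv \Delta^P \dashv \forall_P$, but the reflexivity argument is more self-contained.

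I do not expect a genuine obstacle here; each claim follows by substituting the definition of $\Delta^P$ and then the definition of the relevant quantifier, with the only conceptual step being the recognition that the equality $\rest{s'}{P} = \rest{s}{P}$ is precisely the observational equivalence $s \sim_P s'$. The mildest point of care is in part (3), where one must remember that $\sim_P$ is reflexive so that $s$ itself is an admissible witness (for the existential) and an admissible instance (for the universal); everything else is pure unfolding.
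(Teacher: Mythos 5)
Your proof is correct; the paper in fact states this lemma without proof, treating it as immediate from the definitions, and your argument---substituting $\Delta^P\psi(s)=\psi(\rest{s}{P})$ into the definitions of $\exists_P$ and $\forall_P$, recognizing $\rest{s'}{P}=\rest{s}{P}$ as $s\sim_P s'$, and then using reflexivity of $\sim_P$ (equivalently, the unit and counit of $\exists_P\dashv\Delta^P\dashv\forall_P$) for part (3)---is exactly the routine verification the authors intend. No gaps.
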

Thinking again of $P$ as a way to observe behaviors,
$\Delta^P \exists_P \phi$ is the set of system behaviors $s$ that our observer says plausibly satisfy $\phi$: there is something $P$-equivalent to $s$ that satisfies $\phi$. And $\Delta^P \forall_P \phi$ is the set of system behaviors that our observer can guarantee satisfy $\phi$.

\subsection{The allowance and ensurance operators}\label{subsec.compat_ensure}
Now we turn to the question of how constraints on the behavior of some part of the system constrain the behavior of other parts. We discuss two ways to pass constraints between parts.

     \begin{defn}\label{defn:intermodalities}
        A constraint $\phi$ on a part $P$ induces a constraint on a part $Q$ (of
        the same system $S$) in two universal ways:
        \begin{itemize}
            \item ``Allows $\phi$'':\quad $\allows{P}{Q}\phi := \exists_Q \Delta^P\phi$
                \begin{align*}
               \allows{P}{Q} \phi(q) &=  \exists s \in B_S.\, (\rest{s}{Q} = q) \wedge \phi(\rest{s}{P})\\
               &= \exists p \in B_P.\, \mathfrak{c}(p, q) \meet \phi(p).
                \end{align*}
            \item ``Ensures $\phi$'':\quad $\ensure{P}{Q}\phi := \forall_Q  \Delta^P\phi$
                \begin{align*}
                \ensure{P}{Q} \phi(q) &= \forall s \in B_S.\, (\rest{s}{Q} = q) \Rightarrow \phi(\rest{s}{P})\\
                &= \forall p \in B_p.\, \mathfrak{c}(p, q) \Rightarrow \phi(p).
                \end{align*}
        \end{itemize} 
    \end{defn}
    
    A behavior $q$ of $Q$ allows a constraint $\phi$ on $P$ if $Q$ can be doing $q$ while $P$ is satisfying $\phi$; we write this as $\allows{P}{Q}\phi(q)$. A behavior $q$ of $Q$ ensures $\phi$ on $P$ if whenever $Q$ does $q$, $P$ \emph{must} satisfy $\phi$; we write this as $\ensure{P}{Q}\phi(q)$.
    
    These symbols are chosen due to their relation to the usual modalities of
    possibility ($\Diamond$) and necessity ($\Box$) \cite{kripke1963semantical};
    a behavior $q$ allows $\phi$ if it is \emph{possible} that $P$ satisfies
    $\phi$ while $Q$ does $q$, and a behavior $q$ ensures $\phi$ if it is
    \emph{necessary} that $P$ satisfies $\phi$ while $Q$ does $q$. Indeed, in
    the case that the accessibility relation in the Kripke frame is an
    equivalence relation, we will be able to recover the usual possibility and
    necessity modalities from our allowance and ensurance operators (see Section
    \ref{subsec.modal}).
    
    Note that compatibility and determination appear as particular, pointwise
    cases of the allowance and ensurance operators. For any $p\in B_P$ and $q\in
    B_Q$, we have
    \begin{align*}
        \mathfrak{c}(p, q) &= \allows{P}{Q}(=p)(q) = \allows{Q}{P}(=q)(p) \\
        \mathfrak{d}(p, q) &= \ensure{Q}{P}(=q)(p)
    \end{align*}
    We write $(=p)$ for the map $B_P\to\Prop$ that sends $p'$ to $\true$ if and only if $p=p'$.
    
    \begin{ex}
    We return to our running examples to see our new operators in action.
    \begin{itemize}
        \item In the example of the bicycle with gear ratio of $r$, we can ask what behavior of the pedal is ensured by the wheels moving slower than $w=2$ mph. We have $\ensure{\type{Pedal}}{\type{Wheel}}(\leq 2)$ is the constraint $p\leq \frac{2}{r}$.
        
        \item If the cup of water has temperature $T^0\in \type{Water}_0$ at time $0$, then it cannot have a temperature further away from the ambient room temperature $R$ at a later time. Therefore, 
        $$|R - T^t| > |R - T^0| \vdash \neg \allows{\type{Water}_0}{\type{Water}_t}(=T^0)(T^t).$$
        
        \item Suppose that in the ecosystem example, one was given the goal of introducing a fox population at time $0$ in order to keep the rabbit population in check after a given deadline $d$. Let's say that being kept in check means being between two fixed bounds, $$r_t \mapsto \term{inCheck}(r_t) := k_1 < r_t < k_2$$
        so that $\term{inCheck} : B_{\type{Rabbit}_t} \to \Prop$ is a constraint on rabbits at time $t$. The constraint of being kept in check for all times after the deadline $d$ is the constraint
        $$r \mapsto \forall t \geq d.\, \term{inCheck}(r_t)$$
        on the join $\bigvee_{t \geq d} \type{Rabbit}_t$. The goal may then be expressed as finding a starting fox population $f_0$ which ensures that the rabbit population is kept in check at all times after the deadline:
        $$\ensure{\bigvee_{t \geq d} \type{Rabbit}_t}{\type{Fox}_0}(\forall t \geq d.\, \term{inCheck})(f_0).$$   
        \end{itemize}
 \end{ex}
 
\begin{ex}
We can see a higher-order ensurance in the ecosystem example. If there are any rabbits at time $0$, and if the rabbit population is bounded independent of time, then the rabbits must ensure that there are foxes, and that the foxes ensure there are rabbits:
    $$r_0 \geq 0 \meet r < k \vdash \ensure{F}{R} (f > 0 \meet \ensure{R}{F}(r > 0)).$$ 
If there are no foxes, then the rabbit population is unbounded, and if there are foxes, then there must be rabbits for them to eat. We see that this ecosystem model exhibits a rudimentary form of symbiosis; though the foxes eat the rabbits, they counter-intuitively must ensure that the rabbits do not go extinct, lest they themselves go extinct.
\end{ex}

    Assuming the law of excluded middle, our operators are inter-definable by conjugating with negation. 
    
    \begin{prop}\label{prop.deMorgan}
    Assuming Boolean logic, allowance and ensurance are de Morgan duals. That is, $\neg \allows{P}{Q} \neg = \ensure{P}{Q}$.
    \end{prop}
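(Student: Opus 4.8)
The plan is to verify the identity pointwise: for each behavior $q \in B_Q$ and each constraint $\phi$ on $P$, I would show that $\neg\allows{P}{Q}\neg\phi(q)$ and $\ensure{P}{Q}\phi(q)$ have the same truth value, using the explicit formulas recorded in Definition \ref{defn:intermodalities}. First I would expand the inner allowance applied to $\neg\phi$,
$$\allows{P}{Q}(\neg\phi)(q) = \exists p \in B_P.\, \big(\mathfrak{c}(p,q) \wedge \neg\phi(p)\big),$$
then negate both sides and push the negation through the existential quantifier to obtain
$$\neg\allows{P}{Q}(\neg\phi)(q) = \forall p \in B_P.\, \neg\big(\mathfrak{c}(p,q) \wedge \neg\phi(p)\big).$$
Finally I would simplify the negated conjunction under the quantifier: classically $\neg(A \wedge \neg B)$ is equivalent to $A \Rightarrow B$, so $\neg(\mathfrak{c}(p,q) \wedge \neg\phi(p))$ becomes $\mathfrak{c}(p,q) \Rightarrow \phi(p)$, yielding exactly $\forall p \in B_P.\, (\mathfrak{c}(p,q) \Rightarrow \phi(p)) = \ensure{P}{Q}\phi(q)$.

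Alternatively, and perhaps more transparently, I would argue structurally from the factorizations $\allows{P}{Q} = \exists_Q\Delta^P$ and $\ensure{P}{Q} = \forall_Q\Delta^P$. The reindexing map $\Delta^P$ is precomposition with $\rest{}{P}$, so it commutes strictly with negation: $\Delta^P(\neg\phi) = \neg\,\Delta^P\phi$, constructively. Hence $\neg\,\allows{P}{Q}\,\neg = \neg\,\exists_Q\,\neg\,\Delta^P$, and it remains only to invoke the classical duality $\forall_Q = \neg\,\exists_Q\,\neg$ between the two adjoints of $\Delta^Q$, which gives $\neg\,\allows{P}{Q}\,\neg = \forall_Q\Delta^P = \ensure{P}{Q}$.

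The only real subtlety, and the single place where Boolean logic is essential, is the double-negation step: passing from $\neg(\mathfrak{c}(p,q) \wedge \neg\phi(p))$ to $\mathfrak{c}(p,q) \Rightarrow \phi(p)$, equivalently the identity $\forall_Q = \neg\,\exists_Q\,\neg$. Constructively one obtains only one direction, namely $\ensure{P}{Q}\phi \vdash \neg\,\allows{P}{Q}\,\neg\phi$; the reverse entailment requires eliminating a double negation, which is precisely the use of the law of excluded middle flagged in the hypothesis. Everything else — distributing $\neg$ over $\exists$ and the commutation of $\Delta^P$ with negation — is constructively valid, so no additional care is needed there.
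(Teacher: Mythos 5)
Your proposal is correct and follows essentially the same route as the paper's proof: expand $\allows{P}{Q}\neg\phi$ pointwise, push the negation through the existential (constructively valid), and classically rewrite $\neg(\mathfrak{c}(p,q)\wedge\neg\phi(p))$ as $\mathfrak{c}(p,q)\Rightarrow\phi(p)$ — the paper merely splits your last step into a de Morgan step and a double-negation elimination, which is why it counts two uses of excluded middle where you count one. Your added remarks (the structural reformulation via $\forall_Q=\neg\exists_Q\neg$ and the observation that $\ensure{P}{Q}\phi\vdash\neg\allows{P}{Q}\neg\phi$ holds constructively) are accurate refinements consistent with the paper's note that the proposition fails in non-Boolean toposes.
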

    \begin{proof} The proof uses the law of excluded middle twice:
    \begin{align*}
        \neg \allows{P}{Q} \neg \phi(q) &= \neg \exists p.\, \mathfrak{c}(p, q) \meet \neg \phi(p) \\
            &= \forall p.\, \neg (\mathfrak{c}(p, q) \meet \neg \phi(p)) \\
            &= \forall p.\, \neg \mathfrak{c}(p, q) \join \neg\neg \phi(p) \\
            &= \forall p.\, \mathfrak{c}(p, q) \Rightarrow \phi(p). \qedhere
    \end{align*}
    \end{proof}
    Note that Proposition \ref{prop.deMorgan} does not generalize to arbitrary toposes, where the variation of the sets (in time or in space) means that one must reason constructively in general.

\subsection{Allowance and ensurance, possibility and necessity}\label{subsec.modal}

Finally, we describe the manner in which our intermodalities generalize the classical alethic modalities of possibility and necessity.

    \begin{prop} \label{Things:prop:compatible.and.ensures} Let $P$ and $Q$ be parts of the system $S$ and $\phi$ a constraint on $P$. Then:
    \begin{enumerate}
        \item If a constraint $\phi$ entails $\psi$, then allowing $\phi$ entails allowing $\psi$, and ensuring $\phi$ entails ensuring $\phi$. That is, $\allows{P}{Q}$ and $\ensure{P}{Q}$ are monotone.
        \item If $q$ ensures that $P$ does $\phi$, then $q$ allows $P$ doing $\phi$. That is, $\ensure{P}{Q}\phi \vdash \allows{P}{Q}\phi$.
        \item Allowing $\phi$ entails $\psi$ if and only if $\phi$ entails ensuring $\psi$. That is, $\allows{P}{Q}$ is left adjoint to $\ensure{Q}{P}$.
        \item $\allows{P}{Q}$ commutes with $\join$ and $\exists$, and $\ensure{P}{Q}$ commutes with $\meet$ and $\forall$.

    \end{enumerate}
    \end{prop}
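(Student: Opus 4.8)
The plan is to treat the adjunction of item (3) as the hinge from which (1) and (4) fall out formally, and to isolate (2) as the one claim that genuinely uses the surjectivity built into the definition of a part.

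First I would record that both operators are composites of the quantifier operations, $\allows{P}{Q} = \exists_Q \Delta^P$ and $\ensure{P}{Q} = \forall_Q \Delta^P$. Since each of $\exists_Q$, $\forall_Q$, and $\Delta^P$ is monotone---being one leg of an adjunction---so are their composites, which settles (1) with no further work.

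For (3) I would chain together the component adjunctions $\exists_Q \dashv \Delta^Q$ and $\Delta^P \dashv \forall_P$ supplied by the adjoint triples. Explicitly,
\begin{align*}
\allows{P}{Q}\phi \vdash \psi \;&\iff\; \exists_Q \Delta^P \phi \vdash \psi \\
&\iff\; \Delta^P \phi \vdash \Delta^Q \psi \\
&\iff\; \phi \vdash \forall_P \Delta^Q \psi \;=\; \ensure{Q}{P}\psi,
\end{align*}
where the second equivalence is $\exists_Q \dashv \Delta^Q$ and the third is $\Delta^P \dashv \forall_P$; hence $\allows{P}{Q} \dashv \ensure{Q}{P}$. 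With (3) in hand, (4) is immediate: a left adjoint preserves all joins, so $\allows{P}{Q}$ commutes with $\join$ and with the infinitary joins computed by $\exists$, while $\ensure{P}{Q}$, which after swapping $P$ and $Q$ is the right adjoint of $\allows{Q}{P}$, preserves all meets and hence commutes with $\meet$ and $\forall$.

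The one step that is not purely formal is (2), and this is where I expect the real content to sit. Unfolding the definitions, $\ensure{P}{Q}\phi(q) = \forall p.\, \mathfrak{c}(p,q) \Rightarrow \phi(p)$ whereas $\allows{P}{Q}\phi(q) = \exists p.\, \mathfrak{c}(p,q) \meet \phi(p)$, so the implication would fail if some $q$ admitted no compatible $p$. Here the defining surjectivity of $\rest{}{Q}\colon B_S \to B_Q$ is exactly what I need: given $q$, choose $s \in B_S$ with $\rest{s}{Q} = q$ and set $p := \rest{s}{P}$, so that $s$ itself witnesses $\mathfrak{c}(p,q)$. If $\ensure{P}{Q}\phi(q)$ holds then this particular $p$ satisfies $\phi$, which provides the existential witness for $\allows{P}{Q}\phi(q)$. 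The main obstacle is thus not a computation but the observation that parts have inhabited fibers; every other claim is a formal consequence of the adjoint triples.
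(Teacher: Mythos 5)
Your proof is correct, and it follows exactly the route the paper's machinery sets up (the paper states this proposition without a printed proof): items (1), (3), and (4) are the formal consequences of the adjoint triple $\exists_Q \dashv \Delta^Q \dashv \forall_Q$, with the chain $\exists_Q \Delta^P \phi \vdash \psi \iff \Delta^P\phi \vdash \Delta^Q\psi \iff \phi \vdash \forall_P\Delta^Q\psi = \ensure{Q}{P}\psi$ giving the adjunction and hence preservation of (infinitary) joins and meets. Your identification of item (2) as the only step with real content---holding precisely because the surjectivity of $\rest{}{Q}\colon B_S \twoheadrightarrow B_Q$ guarantees every $q$ has a compatible $p := \rest{s}{P}$---is exactly the role Definition \ref{def.part} plays here.
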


Fix a part $P$. Then for any part $Q$, we obtain two modalities on $P$ by composing our intermodalities from $P$ to $Q$ with their corresponding intermodality from $Q$ to $P$. 
\begin{prop}
The operators $\allows{Q}{P}\allows{P}{Q}$ and $\ensure{Q}{P}\ensure{P}{Q}$ are a pair of adjoint modalities on $\Prop^{B_P}$. They are the identity modality if and only if $P \leq Q$.
\end{prop}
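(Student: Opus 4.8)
Write $\Diamond := \allows{Q}{P}\allows{P}{Q}$ and $\Box := \ensure{Q}{P}\ensure{P}{Q}$ for the two endo-operators on $\Prop^{B_P}$ in question. The plan is to split the statement into two independent tasks: first, that $\Diamond$ is left adjoint to $\Box$ (which is what ``adjoint modalities'' asks for), and second, the characterization of when the pair collapses to the identity. The first task I expect to be formal and short; the second is where the real work lies.

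For the adjunction, the idea is simply to compose the two adjunctions already recorded in Proposition \ref{Things:prop:compatible.and.ensures}(3). That proposition, read once with the roles of $P$ and $Q$ as given and once with them swapped, supplies $\allows{P}{Q}\dashv\ensure{Q}{P}$ and $\allows{Q}{P}\dashv\ensure{P}{Q}$. Chaining these, for $\phi,\theta\in\Prop^{B_P}$ one has $\Diamond\phi\vdash\theta$ iff $\allows{P}{Q}\phi\vdash\ensure{P}{Q}\theta$ iff $\phi\vdash\Box\theta$, which is exactly $\Diamond\dashv\Box$. I would also note that these deserve to be called modalities: monotonicity is inherited from Proposition \ref{Things:prop:compatible.and.ensures}(1), the ``$T$-type'' laws $\phi\vdash\Diamond\phi$ and $\Box\phi\vdash\phi$ follow from surjectivity of the restrictions (reflexivity of the accessibility relation introduced below), and Boolean de Morgan duality $\Box=\neg\Diamond\neg$ comes from Proposition \ref{prop.deMorgan} applied twice, so in particular $\Diamond=\id$ iff $\Box=\id$ and it suffices to analyze $\Diamond$.

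For the collapse criterion, the key step is to unfold $\Diamond$ into Kripke form. Expanding the definitions gives, for $p\in B_P$,
$$\Diamond\phi(p)=\exists s,s'\in B_S.\,(\rest{s'}{P}=p)\meet(\rest{s}{Q}=\rest{s'}{Q})\meet\phi(\rest{s}{P}),$$
so that $\Diamond$ is the possibility modality for the accessibility relation ``$p\approx p'$ when $p,p'$ are the $P$-restrictions of two $Q$-equivalent system behaviors'', a reflexive and symmetric (generally non-transitive) relation. The easy direction $P\leq Q\Rightarrow\Diamond=\id$ then falls out: if $\rest{}{P}$ factors through $\rest{}{Q}$ (Definition \ref{def.part}), then $\rest{s}{Q}=\rest{s'}{Q}$ forces $\rest{s}{P}=\rest{s'}{P}=p$, collapsing the existential to $\phi(p)$, a witness existing by surjectivity.

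The hard part, and the main obstacle, is the converse $\Diamond=\id\Rightarrow P\leq Q$. The plan is to probe $\Diamond$ with the singleton constraints $(=p)$: the hypothesis $\Diamond(=p)=(=p)$ says precisely that $\approx$ is discrete, i.e.\ that $\rest{s}{Q}=\rest{s'}{Q}$ implies $\rest{s}{P}=\rest{s'}{P}$ for all $s,s'\in B_S$. This is exactly the statement that the kernel pair $\sim_Q$ refines $\sim_P$, which by the universal property of the quotient $\rest{}{Q}\colon B_S\twoheadrightarrow B_Q$ (Definition \ref{def.part}) yields the factoring surjection $B_Q\twoheadrightarrow B_P$ under $B_S$, i.e.\ $P\leq Q$. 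The delicate points to get right here are that testing on singletons suffices --- which is safe because $\Diamond$, being a left adjoint, preserves the joins through which every constraint is built from singletons (Proposition \ref{Things:prop:compatible.and.ensures}(4)) --- and the bookkeeping that translates ``$\approx$ discrete'' into kernel-pair refinement.
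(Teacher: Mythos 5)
Your proof is correct, and its adjunction half coincides with the paper's: both obtain $\allows{Q}{P}\allows{P}{Q} \dashv \ensure{Q}{P}\ensure{P}{Q}$ by composing the two adjunctions of Proposition \ref{Things:prop:compatible.and.ensures}(3). Where you genuinely diverge is the identity criterion. The paper stays at the level of the adjunction: since $\phi \vdash \allows{Q}{P}\allows{P}{Q}\phi$ always holds, the composite is the identity iff $\allows{Q}{P}\allows{P}{Q}\phi \vdash \phi$ for all $\phi$, which it transposes across the adjunction to the condition $\allows{P}{Q}\phi \vdash \ensure{P}{Q}\phi$ for all $\phi$, and the equivalence of \emph{that} with $P \leq Q$ is simply asserted (its content is essentially Proposition \ref{Things:lem:Determines.Mereology} applied to singleton constraints, though the paper never spells this out). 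You instead unfold the composite into Kripke form and test the hypothesis directly on the singletons $(=p)$, translating $\allows{Q}{P}\allows{P}{Q} = \id$ into the kernel-pair refinement $\sim_Q \subseteq \sim_P$, and then invoke the universal property of the quotient $\rest{}{Q} \colon B_S \twoheadrightarrow B_Q$ to produce the surjection $B_Q \twoheadrightarrow B_P$ under $B_S$. Your route buys self-containedness --- you actually prove the final equivalence that the paper leaves implicit, and you get the order convention of Definition \ref{def.part} right --- at the cost of abandoning the purely adjunction-theoretic bookkeeping. Two minor points: your reduction ``$\Diamond = \id$ iff $\Box = \id$'' via Proposition \ref{prop.deMorgan} needs Boolean logic, whereas uniqueness of adjoints (the identity is adjoint to the identity) gives it in any topos, which matters given the paper's stated topos-generality; and your worry about whether testing on singletons suffices is moot in the direction you actually use it --- you are merely instantiating the hypothesis $\Diamond = \id$ at singleton constraints, so the join-preservation argument from Proposition \ref{Things:prop:compatible.and.ensures}(4) is not needed (it would only be required to reconstruct $\Diamond = \id$ from its values on singletons, a converse your proof obtains independently via the easy direction anyway).
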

\begin{proof}
By Proposition \ref{Things:prop:compatible.and.ensures} item 3, these two
modalities are the composites of adjoint pairs of operators, and hence are
adjoint themselves. Moreover, $\allows{Q}{P}\allows{P}{Q}$ is the identity if
and only if $\allows{Q}{P}\allows{P}{Q} \phi \vdash \phi$ for all $\phi$, which
occurs if and only if $\allows{P}{Q}\phi \vdash \ensure{P}{Q}\phi$ for all
$\phi$, which occurs if and only if $P \leq Q$.
\end{proof}

These modalities describe constraints on $P$ as seen through the part $Q$. To
obtain a description of possibility and necessity, assume that $B_S$ is inhabted
--- that there is some behavior of the system. We let $Q=\bot$ be the system whose behavior type $B_Q=\ast$ consists of just a single element. Then the adjoint modalities
$$\allows{\bot}{P}\allows{P}{\bot} \qquad \textrm{left adjoint to} \qquad \ensure{\bot}{P}\ensure{P}{\bot}$$
describe possibility and necessity on $B_P$. 

For example, for any constraint $\varphi$ on $B_P$, the constraint $\allows{\bot}{P}\allows{P}{\bot}(\varphi)$ maps all elements of $B_P$ to $\true$ if there is some behavior $p$ that satisfies $\varphi$, and maps all elements to $\false$ otherwise. Thus the modality detects whether $\varphi$ is \emph{possible}: that is, whether there is some behavior that satisfies $\varphi$. 

On the other hand, $\ensure{\bot}{P}\ensure{P}{\bot}(\varphi)$ is the constraint that maps all elements of $B_P$ to $\true$ if all behaviors $p \in B_P$ satisfy $\varphi$, and maps all elements to $\false$ otherwise; thus this modality detects whether $\varphi$ is always satisfied, and hence \emph{necessary}.

More generally, the usual semantics of the ``it is possible that'' and ``it is necessary that'' modalities $\Diamond$ and $\Box$ take place in a Kripke frame $(W,A)$, where $W$ is a set, known as the set of worlds, and $A$ is a binary relation on $W$ known as the accessibility relation. The predicate then $\Diamond(\varphi)(w)$ holds if $\varphi(w')$ for \emph{some} $w'$ such that $wAw'$, and $\Box(\varphi)(w)$ holds if $\varphi(w')$ for \emph{all} $w'$ such that $wAw'$ \cite{kripke1963semantical}. If $A$ is an equivalence relation, then we may equivalently describe $A$ by an epi $W \twoheadrightarrow W/A$. In this case, we have $\Diamond=\allows{W/A}{W}\allows{W}{W/A}$ and $\Box= \ensure{W/A}{W}\ensure{W}{W/A}$ as modalities on $\Prop^W$. 

\section{Outlook and Conclusion}
We have presented a logic that describes how constraints---restrictions on
behavior---are passed from one part of a system to another. While we have
presented this from a set theoretic point of view, we have taken care to use
arguments that are valid in any topos (with the noted exception of Proposition \ref{prop.deMorgan}, which only holds in boolean toposes). As a consequence, our logic retains its character as a logic of constraint passing across a wide variety of semantics. One possibly valuable notion of semantics is one that captures a notion of time.

Indeed, behavior is best conceived as occurring over time, though of course the question of what time \emph{is} remains an issue. One can imagine that a system has, for any interval or ``window'' of time, a set of possible behaviors, and that each such behavior can be cropped or ``clipped'' to any smaller window of time. This is the perspective of \emph{temporal type theory} \cite{Schultz.Spivak:2017a}. While that work uses topos theory in a significant way, the main idea is easy enough.

Whether we speak of a bicycle, an ecosystem, or anything else that could be said
to exist in time, it is possible to consider the set of behaviors of that thing
over an interval of time, say over the ten-minute window $(0,10)$. Above we
often discussed an idea which can be generalized to any system $S$ that exists
in time. Namely, we can consider different parts of time as parts of $S$. Given
any behavior $s$ over the 10-minute window, we can clip it to the first minute
$\rest{s}{(0,1)}$; this gives a function $S(0,10)\to S(0,1)$, which is often
called \emph{restriction}, though we will continue to call it clipping. Let's
assume that every possible behavior at $(0,1)$ extends to some behavior over the
whole interval---i.e.\ that the universe doesn't just end under certain
conditions on $(0,1)$-behavior---at which point we have declared that the
clipping function is surjective, and hence gives a part in the sense of section
Definition \ref{def.part}. We call it a \emph{temporal part}.

What then does it mean to pass constraints between temporal parts? The idea
begins to take on a control-theoretic flavor: behavioral constraints at one time
window may allow or ensure constraints at other time windows. A mother could say
``doing this now ensures no dessert tonight''. The child could ask ``does our
position on the road now allow me to play with Rutherford this afternoon?'' A control system could attempt to solve the problem ``what values of parameter $P$ can I choose, 5 seconds from now, that both allow current conditions and ensure that in 10 minutes we will achieve our target?''

In any case, as mentioned in the introduction, our original goal was to understand what makes a thing a thing, e.g.\ what gives things like bricks the quality of being cohesive (not two bricks) and closed (not the left half of a brick). We believe that a good logic of constraint passing between parts is essential for that, but perhaps not sufficient. The question of what additional structures need to be added or considered in order construct a viable notion of thing, remains future work.

\printbibliography
\end{document}